\documentclass[letterpaper, 10 pt, conference]{ieeeconf}  

\IEEEoverridecommandlockouts                              
\usepackage{amsmath} 
\usepackage{amssymb}  
\usepackage{xcolor}
\usepackage{hyperref}
\usepackage{graphicx}
\usepackage[english]{babel}   
\usepackage{tikz}

\definecolor{myblue}{HTML}{0072BD}
\definecolor{myred}{HTML}{D95319}
\definecolor{myyellow}{HTML}{EDB120}
\definecolor{mypurple}{HTML}{7E2F8E}
\definecolor{mygreen}{HTML}{77AC30}
\usepackage{multirow}
\usepackage{aligned-overset}
\usepackage{siunitx}

\usepackage{tikz}

\usetikzlibrary{positioning,shapes.geometric,}
\usepackage[european,americanvoltages,americancurrents]{circuitikz}

\title{\LARGE \bf
Data-driven Linear Quadratic Integral Control: \\
A Convex Formulation and Policy Gradient Approach
}

\author{Armin Gießler, Pol Jané-Soneira, and Sören Hohmann
\thanks{A. Gießler, S. Hohmann are with the Institute of Control Systems, Karlsruhe Institute of Technology (KIT), 76131, Karlsruhe, Germany. P. Jané-Soneira is with ABB Corporate Research Center, Kallstadter Str. 1, 68309. Corresponding author is Armin Gießler, 
{\tt \footnotesize armin.giessler@kit.edu}.}
}

\newtheorem{theorem}{Theorem}
\newtheorem{proposition}{Proposition}

\newtheorem{remark}{Remark}
\newtheorem{assumption}{Assumption}
\newtheorem{lemma}{Lemma}

\newtheorem{objective}{Objective}

\usepackage[style=ieee, 
backend=bibtex,  
isbn=false,  
doi=false, 
giveninits=true,url=false,date=year, maxbibnames=99, minnames=1, maxnames=3, sorting=none,alldates=year,]{biblatex} 
\AtEveryBibitem{\clearfield{pages}}\AtEveryBibitem{\clearfield{volume}}

\usepackage{fancyhdr}
\fancypagestyle{firstpage}{
    \fancyhf{}
    
    \fancyfoot[C]{\small This work has been submitted to the IEEE for possible publication. \\Copyright may be transferred without notice, after which this version may no longer be accessible.}
}

\begin{document}

\maketitle
\thispagestyle{empty}
\pagestyle{empty}

\thispagestyle{firstpage}

\begin{abstract}
This paper studies the data-driven synthesis of linear quadratic integral (LQI) controllers for continuous-time systems. The objective is to achieve optimal state-feedback control with integral action for reference tracking using only measured data. To this end, we derive a data-driven closed-loop parameterization of the augmented dynamics that incorporates the integral state while relying solely on input-state-output measurements of the underlying system. Based on this parameterization, a data-driven convex optimization problem is formulated whose solution yields the optimal linear quadratic regulator (LQR) feedback gain for the augmented system without explicit knowledge of the system matrices. In addition, a policy gradient flow is derived to compute the optimal controller within the space of stabilizing gains. The proposed approach enables data-driven optimal tracking control while avoiding explicit state augmentation in the data collection phase. The effectiveness of the method is demonstrated through a numerical example involving a  distributed generation unit (DGU) in a DC microgrid.
\end{abstract}

\section{Introduction}
Recent advances in data-driven control enable the synthesis of  optimal controllers directly from measurement data, bypassing the need for exact system models \cite{de2019formulas}. 
Many data-driven approaches consider the  infinite--horizon linear quadratic regulator (LQR) that regulates the state to zero by minimizing an objective that achieves a compromise between transient performance and control effort. 
To achieve robust output tracking of a reference signal, an integral action can be incorporated to the LQR control law to complement the proportional action, referred to as linear quadratic integral (LQI) control. 
This paper focuses on data-driven LQI control by utilizing convex optimization and policy gradient methods, aiming to achieve optimal tracking performance while relying solely on input--state--output measurements. 

\subsubsection*{Literature review}
This section reviews relevant literature on LQI and PID synthesis via LQR, as well as data-driven convex optimization and policy gradient methods for LQR.

LQI control was first introduced in \cite{YOUNG01051972} to achieve asymptotic tracking of constant reference signals. Since then, it has been widely applied across various domains, including aerospace \cite{purnawan2017design}, power grids \cite{kedjar_lqr_2011}, and process control \cite{MALKAPURE201455} for tracking time-varying reference signals. 
At its core, the LQI control law consists of a static state-feedback controller augmented with an integral action that accumulates the tracking error between the output and the reference signal \cite{YOUNG01051972}. 
The synthesis of a stabilizing LQI control law can be achieved by augmenting the system dynamics with an integral state and subsequently applying  LQR design methods, such as convex optimization \cite{Feron,Balakrishnan1995} or policy gradient approaches \cite{fazel2018global,Bin2023}.

The LQI controller is often compared to the classical proportional-integral-derivative (PID) controller, as both involve feedback of the integrated tracking error. 
A linear system in closed loop with a PID controller can be equivalently represented as an augmented linear system controlled by a static state-feedback controller \cite{mukhopadhyay1978pid} or a static output-feedback controller \cite[Chapter~7]{wang_pid_nodate}.
These controllers can then be designed using classical methods, such as the LQR.
However, recovering the original PID gains from the feedback gains of the augmented system is generally challenging, for example, because a corresponding set of PID gains may not exist or may not be unique \cite{mukhopadhyay1978pid}. The work of  \cite{polyak2022new} shows that, for single-input single-output (SISO) systems, the original PID gains can always be reconstructed if the input and output matrices are orthogonal.
The structural constraints inherent to PID control make its synthesis challenging \cite{datta1999structure}. For instance, there is no convex reparameterization available for PID gain synthesis, and the set of stabilizing PID gains may not be path-connected, which prevents the use of policy gradient methods. We stress that LQI control does not have such restrictions and the design of a stabilizing control law boils down to solving the LQR problem of the augmented system. This, however, comes at the cost of requiring full-state measurements during control.
Policy gradient solutions for dynamic output feedback without feedthrough that minimizes the LQR objective are discussed in \cite{Duan2024}. 
This class of output-feedback controllers includes observer-based state-feedback controllers as a special case, but excludes PID controllers that incorporate a feedthrough term.

In the context of data-driven optimal control, \cite{de2019formulas} presents a parameterization of 
a linear system in closed loop with state-feedback controller using only input-state data and a matrix satisfying a matrix equality.
An advancement of this parameterization is the sample covariance parameterization presented in \cite{Zhao2025,ZhaoCovariance}, resulting in data matrices independent of the sample size. These representations laid the foundation for numerous data-driven LQR methods, including convex optimization approaches \cite{de2019formulas,lopez2025databasedcontrolcontinuoustimelinear} and policy gradient methods \cite{Zhao2023,Zhao2025}. 

\subsubsection*{Contributions} 
The main contributions of this paper are summarized as follows:
\begin{enumerate}
   \item A data-driven closed-loop parameterization for the synthesis of LQI controllers is introduced, enabling the direct design of optimal tracking controllers from measured data of the underlying system.
   \item Based on this parameterization, a convex optimization problem and a policy gradient flow are  derived whose solution yields the optimal LQR feedback gain for the augmented system and, consequently,  the stabilizing LQI controller gain. 
   \item The proposed approach is validated through a representative numerical example involving a distributed generation unit (DGU) in a DC microgrid.
\end{enumerate}

\subsubsection*{Paper Organization}
Preliminaries are presented in Section~\ref{sec:prelim}. Section~\ref{math:main} introduces the data-driven parameterization, the convex program, and the policy gradient flow. Simulation results of a DGU are presented in Section~\ref{sec:sim}.
Finally, the paper ends with a conclusion in Section~\ref{sec:conc}.

\subsubsection*{Notation}
\label{subsubsec:notation}
The set of real numbers is denoted by $\mathbb{R}$. The identity matrix of dimension $n \times n$ is given by $I_n$, and $0_{n,p}$ denotes the $n \times p$ zero matrix. For a matrix $A$, its transpose, trace, and Frobenius norm are denoted by $A^\top$, $tr(A)$, and $\Vert A \Vert_F$, respectively. The operator $\operatorname{diag}(\cdot)$ constructs a block-diagonal matrix from its arguments. A symmetric positive definite (semidefinite) matrix $A$ is denoted by $A \succ 0$ ($A \succeq 0$). The Moore-Penrose pseudoinverse of a matrix $A$ is denoted by $A^\dagger$, and its nullspace is denoted by $\ker(A)$. The eigenvalues of a square matrix $A$ are denoted by $\lambda_i(A)$, and their real parts by $ \operatorname{Re} (\lambda_i)$. A square matrix is Hurwitz if all its eigenvalues have strictly negative real parts.

\section{Preliminaries}
\label{sec:prelim} 
We consider the linear time-invariant system 
\begin{subequations}
   \label{math:sys1} 
   \begin{align}
 \dot x(t) &= Ax(t) + Bu(t), \label{math:sys1a} \\
 y(t) &= Cx(t), 
\end{align} 
\end{subequations}
where $x(t)\in\mathbb{R}^n$, $u(t)\in\mathbb{R}^m$, and $y\in\mathbb{R}^p$.\footnote{Henceforth, explicit time dependency of variables is omitted for readability when clear from the context.}

\subsection{Linear Quadratic Regulator}
 The infinite-horizon linear  LQR problem is formulated as%
 \begin{subequations}
   \label{math:LQR} 
   \begin{align}
   \min_{x,u}~& \int_0^\infty x^\top Q x + u^\top R u \mathrm{d}t, \\
   \text{s.t.}~&  \eqref{math:sys1a} \text{ and }  x(0)=x_0,
\end{align}
 \end{subequations}
where $Q \succeq 0$ and $R \succ 0$. 
\begin{assumption}
   \label{ass:1} 
   The system $(A,B)$ is stabilizable and the pair $(A,\sqrt{Q})$ is detectable.
\end{assumption}

Assumption~\ref{ass:1} holds throughout.
The optimal  solution to \eqref{math:LQR} is the  controller $u = - K^* x = -R^{-1}B^\top P^* x$,  where $P^*$ is the unique positive definite solution of the CARE 
\begin{align}
 A^\top P + PA - PB R^{-1} B^\top P + Q = 0.
\end{align} 
Moreover, the closed-loop matrix $A-BK^* $  is Hurwitz. 
The optimal controller $K^*$ is also the solution to 
\begin{subequations}
   \label{math:convex} 
   \begin{align}
 \min_{K,W}~& \operatorname{tr }(QW ) + \operatorname{tr }(K^\top R K W ) \\
 \text{s.t.}~& (A-BK)^\top W + W(A-BK) + I_n \preceq 0,
 & W\succeq 0,
\end{align} 
\end{subequations}
which admits a convex reformulation  \cite{Feron}. 
The set of Hurwitz stable feedback gains of system $(A,B)$ is denoted by 
\begin{align}
   \mathcal{K} & = \{ K \in \mathbb{R}^{m \times n} \mid \operatorname{Re}(\lambda_i(A-BK)) < 0 ~ \forall i \}, \label{math:K}
\end{align}
and is open, unbounded, and path-connected \cite[Section~3]{bu2019}.
The policy gradient flow of the LQR problem is defined as 
\begin{subequations}
      \label{math:grad} 
   \begin{align}
 \dot K & = - \nabla f_K, \quad K(0)=K_0\in\mathcal{K},\\
   \nabla f_K & = 2 (RK - B^\top P_K) W_K
\end{align} 
\end{subequations}
where $P_K$ and $W_K$ are the solutions to Lyapunov equations
\begin{align}
   0 & = (A-BK)^\top P + P (A-BK) + Q + K^\top R K, \\
   0 & = (A-BK) W + W (A-BK) + I_n,
\end{align}
respectively. The gradient flow \eqref{math:grad} generates unique trajectories $K(t)$ within $\mathcal{K}$ that converges to $K^*$ for $t\to\infty$ \cite{bu2020clqr}.

\subsection{Linear Quadratic Integral Control}
\label{subsec:LQI} 
We present the LQI control from \cite{YOUNG01051972}. To enforce asymptotic tracking of a reference signal $r\in \mathbb{R}^p$, the integral state 
\begin{align}
   \dot{z} = r-y = r-Cx,
\quad
z(t)=\int_0^t r(\tau)-y(\tau)\mathrm{d}\tau .
\end{align} 
is incorporated to the system \eqref{math:sys1}, yielding 
\begin{align}
   \label{math:sysa} 
 \begin{bmatrix} \dot x \\ \dot z\end{bmatrix} & = \begin{bmatrix} A & 0_{n,p} \\ -C & 0_p  \end{bmatrix} \begin{bmatrix} x \\ z \end{bmatrix} + \begin{bmatrix} B \\ 0_{p,m} \end{bmatrix} u + \begin{bmatrix} 0_{n,p} \\ I_p \end{bmatrix} r.
\end{align}%
Define the aggregated state as $x_a := \begin{bmatrix} x \\ z \end{bmatrix}\in\mathbb{R}^{n+p}$.
A linear state-feedback control for \eqref{math:sysa} is 
\begin{align}
   \label{math:law_LQI} 
 u = - K x_a = - K_{\mathrm{PD }} x - K_{\mathrm{I}} z,
\end{align} 
where $K = \begin{bmatrix}K_{\mathrm{PD }}& K_{\mathrm{I}}\end{bmatrix}\in\mathbb{R}^{m\times (n+p)}$, and  $K_{\mathrm{PD}}\in\mathbb{R}^{m\times n}$ and $ K_{ \mathrm{I}}\in\mathbb{R}^{m\times p}$ denote the \textit{proportional and derivative} and \textit{integral} feedbacks gains, respectively. 
Let $\bar{x}_a = \begin{bmatrix}\bar{x}^\top & \bar{z}^\top\end{bmatrix}^\top$ be an equilibrium point, and define the error variable $\tilde{x}_a := x_a-\bar{x}_a$ (see \cite{MALKAPURE201455} for details). In error coordinates, the augmented dynamics \eqref{math:sysa} can be written as
\begin{align}
   \label{math:sys_a} 
 \dot{\tilde x}_a & = \begin{bmatrix} A & 0_{n,p}  \\ - C & 0_p \end{bmatrix} \tilde x_a + \begin{bmatrix} B \\ 0\end{bmatrix} \tilde u_a = A_a \tilde x_a + B_a \tilde u_a.
\end{align} 
Under $\tilde{u}=-K\tilde{x}_a$, the closed-loop system becomes
\begin{align}
   \label{math:closed_aug} 
  \dot{\tilde x}_a & = \begin{bmatrix} A- B K_{\mathrm{PD}} & -B K_{\mathrm{I }} \\ - C & 0_p \end{bmatrix}  \tilde x_a  = (A_a - B_a K ) \tilde x_a. 
\end{align} 
If $A_a - B_a K$ is Hurwitz and the reference signal is constant, then $y(t)\to r$ for $t\to\infty$.
The LQI controller \eqref{math:law_LQI} for original system \eqref{math:sys1} can be synthesized by computing the optimal LQR feedback $K^*$ for the augmented system \eqref{math:sys_a}. 


\subsection{Data-driven Closed-loop Parameterizations}
We next summarize the continuous-time closed-loop parameterization of \cite{de2019formulas}, adapted to the sampled covariance parameterization in \cite{Zhao2025}.
Consider a sequence of $T\in\mathbb{N}$ measurements of state, input, and state derivative trajectories of system \eqref{math:sys1a}, sampled at interval $\Delta >0$, 
\begin{subequations}
   \label{math:data} 
   \begin{align}
 X &= \begin{bmatrix} x(0) & x(\Delta) & \dots & x((T-1)\Delta) \end{bmatrix}\in\mathbb{R}^{n\times T},\\
   U & =\begin{bmatrix} u(0) & u(\Delta) & \dots & u((T-1)\Delta)  \end{bmatrix}\in\mathbb{R}^{m\times T}, \\
   X' & = \begin{bmatrix} \dot x(0) & \dot x(\Delta) & \dots & \dot x((T-1)\Delta) \end{bmatrix}\in\mathbb{R}^{n\times T}.
\end{align} 
\end{subequations}
These matrices satisfy
\begin{align}
   \label{math:sys_stacked} 
 X' = AX + BU = \begin{bmatrix} B & A \end{bmatrix} \begin{bmatrix} U \\ X \end{bmatrix}. 
\end{align} 
Define the associated sample covariance matrices to \eqref{math:data} as%
\begin{align}
   \label{math:data_cov} 
 \overline{X} & = \frac{1}{T }X \begin{bmatrix} U \\ X  \end{bmatrix}^\top\!\!,~ \overline{U}= \frac{1}{T}U \begin{bmatrix} U \\ X  \end{bmatrix}^\top \!\!,~ \overline{X}'= X' \begin{bmatrix} U \\ X  \end{bmatrix}^\top \! \!,
\end{align} 
whose dimensions are independent of the sample size $T$.

\begin{theorem}
   Let $ \operatorname{rank} \left[\begin{smallmatrix}\overline{U} \\ \overline{X}\end{smallmatrix}\right]  = n+m$ hold. Then, the closed-loop system $\dot x = (A-BK)x$ admits the equivalent data-driven representation
   \begin{align}
    \dot x &= \overline{X}' G x, \\
    \begin{bmatrix}  I_n \\- K  \end{bmatrix}& = \begin{bmatrix}  \overline{X} \\\overline{U}  \end{bmatrix} G,    
   \end{align} 
   where $G\in\mathbb{R}^{(n+m) \times n}$.
\end{theorem}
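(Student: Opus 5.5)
The plan is a direct algebraic verification. The rank hypothesis makes the data matrix square and invertible, so the constraint on $G$ always has exactly one solution. Multiplying the stacked data identity \eqref{math:sys_stacked} on the right by the sample covariance factor then shows that $\overline{X}'G$ reproduces $A-BK$.

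\emph{Step 1: identity for the covariance of the derivative data.} I will right-multiply \eqref{math:sys_stacked} by $\left[\begin{smallmatrix} U \\ X\end{smallmatrix}\right]^\top$ and scale by $\frac1T$. By \eqref{math:data_cov} this gives
\begin{align*}
\tfrac{1}{T} X' \begin{bmatrix} U \\ X\end{bmatrix}^\top = \begin{bmatrix} B & A\end{bmatrix}\begin{bmatrix} \overline{U} \\ \overline{X}\end{bmatrix} = B\overline{U} + A\overline{X}.
\end{align*}
The definition of $\overline{X}'$ in \eqref{math:data_cov} omits the factor $\frac1T$. I will read it as $\overline{X}' = \frac1T X'[\,U^\top\ X^\top]$, consistent with $\overline{X}$ and $\overline{U}$, so that $\overline{X}' = B\overline{U}+A\overline{X}$. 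Without this normalization every identity below holds up to the scalar $T$, which could be absorbed into $G$ only by also rescaling the constraint. Fixing this normalization is the one point that needs care; it is bookkeeping rather than a real obstacle.

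\emph{Step 2: existence and uniqueness of $G$.} The matrix $\left[\begin{smallmatrix} \overline{U} \\ \overline{X}\end{smallmatrix}\right]$ is $(n+m)\times(n+m)$. Under the rank assumption it is therefore invertible, and so is its row permutation $\left[\begin{smallmatrix} \overline{X} \\ \overline{U}\end{smallmatrix}\right]$. Hence, for every $K\in\mathbb{R}^{m\times n}$, the unique matrix satisfying the constraint is
\begin{align*}
G = \begin{bmatrix} \overline{X} \\ \overline{U}\end{bmatrix}^{-1}\begin{bmatrix} I_n \\ -K\end{bmatrix}.
\end{align*}
Conversely, every $G\in\mathbb{R}^{(n+m)\times n}$ defines a gain through $K = -\overline{U}G$ whenever $\overline{X}G = I_n$. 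This gives a bijection between gains $K$ and admissible $G$.

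\emph{Step 3: equivalence of the dynamics.} For any $G$ satisfying the constraint, Step 1 gives
\begin{align*}
\overline{X}'G = A\overline{X}G + B\overline{U}G = A I_n + B(-K) = A-BK.
\end{align*}
Hence $\dot x = \overline{X}'Gx$ coincides with $\dot x=(A-BK)x$. Combined with Step 2, this proves the claimed equivalence in both directions.
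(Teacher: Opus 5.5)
Your proof is correct and follows essentially the route the paper has in mind: the paper omits this proof and cites \cite{de2019formulas} and \cite{Zhao2025}, but proves the augmented analogue, Theorem~\ref{th:data}, in exactly this way, via the data identity $\overline{X}' = A\overline{X} + B\overline{U}$, solvability of the constraint from the rank condition, and substitution to get $\overline{X}'G = A-BK$. Using invertibility of the square covariance matrix in place of the paper's Rouch\'e--Capelli consistency argument additionally gives uniqueness of $G$, and your $\frac{1}{T}$ normalization of $\overline{X}'$ is exactly what the paper implicitly assumes in \eqref{math:data_1}.
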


The proof follows from \cite[Theorem 2 and Remark 2]{de2019formulas} together with \cite[Lemma 1]{Zhao2025}, and is omitted for brevity.


\begin{remark}
   \label{remark:1}
Following \cite{Lopez2023}, to avoid state-derivative measurements and mitigate noise amplification caused by numerical differentiation, the data matrices \eqref{math:data} can be replaced by%
\begin{subequations}
   \label{math:int_data} 
   \begin{align}
     \tilde{X} & \! =  \! \begin{bmatrix} \int_{t_1}^{t_1+\delta }x(\tau)\mathrm{d}\tau & \dots &\int_{t_T}^{t_T+\delta }x(\tau)\mathrm{d}\tau \end{bmatrix}, \\
      \tilde{U} &  \!= \! \begin{bmatrix} \int_{t_1}^{t_1+\delta }u(\tau)\mathrm{d}\tau & \dots &\int_{t_T}^{t_T+\delta }u(\tau)\mathrm{d}\tau \end{bmatrix},\\
          \overset{ \scriptscriptstyle \Delta}{X} &  \! =  \! \begin{bmatrix} x(t_1 \! + \!\delta) \!- \!x(t_1) & \!  \!  \dots   \! \! \!  & x(t_T  \! +  \! \delta) \!- \! x(t_T)\end{bmatrix},
\end{align}
\end{subequations}
respectively. These matrices result from integrating the stacked system \eqref{math:sys_stacked} over each interval $[t_i, t_i+\delta]=[\Delta (i-1), \Delta (i-1)+\delta],i\in\{1,\dots, T\}$. 
Hence, the linear relation \(\overset{\scriptscriptstyle \Delta}{X} = A \tilde{X} + B \tilde{U}\) holds, 
and substituting  \eqref{math:data} by \eqref{math:int_data} to compute \eqref{math:data_cov} does not alter the subsequent results.
\end{remark}

\section{Problem Formulation and Approach}
\label{math:main} 
In this section, we first analyze the augmented system \eqref{math:sys_a} and clarify the relation of LQI control to classical PI(D) control.
We then develop a data-driven closed-loop parameterization, followed by the formulation of the convex program and the policy gradient flow.
The main objective of the paper is the following.

\begin{objective}
Formulate a convex optimization problem and a policy gradient flow, using only data from the original system \eqref{math:sys1}, to synthesize the optimal LQR state-feedback gain for the augmented system \eqref{math:sys_a}.
\end{objective}

Let $Q_a = \operatorname{diag}(Q_x,Q_z)$ be the weighting matrix for the augmented state $\tilde{x}_a$, where $Q_x\succeq 0$ corresponds to original state $\tilde x$ and  $Q_z\succ 0$ to the integral state  $\tilde z$.
\footnote{Choosing $Q_z\succ0$ ensures that the zero eigenvalues introduced by the integrator dynamics are observable through $\sqrt{Q_z}$ and thus reflected in the cost, which is natural for the LQI problem.}

\begin{assumption}
   \label{ass:3} 
   The system $(A_a,B_a)$ is stabilizable and the pair $(A_a,\sqrt{Q_a})$ is detectable.
\end{assumption}

These conditions ensure that the optimal LQR controller stabilizes the augmented system. The following lemmas characterize necessary and sufficient conditions on the system matrices of the original system such that Assumption 2 holds.

\begin{lemma}
   The system $(A_a,B_a)$ is stabilizable if and only if $(A,B)$ is stabilizable and%
   \begin{align}
    \operatorname{rank} \begin{bmatrix} A & B \\ C & 0_{p,m} \end{bmatrix} = n+p. \label{math:second_cond} 
   \end{align}
\end{lemma}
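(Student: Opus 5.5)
We will use the Popov--Belevitch--Hautus (PBH) test. It states that $(A_a,B_a)$ is stabilizable if and only if $\operatorname{rank}\begin{bmatrix} A_a-\lambda I_{n+p} & B_a\end{bmatrix} = n+p$ for every $\lambda\in\mathbb{C}$ with $\operatorname{Re}(\lambda)\ge 0$. The block structure of $A_a$ and $B_a$ in \eqref{math:sys_a} gives
\begin{align*}
\begin{bmatrix} A_a-\lambda I & B_a\end{bmatrix} = \begin{bmatrix} A-\lambda I_n & 0_{n,p} & B \\ -C & -\lambda I_p & 0_{p,m}\end{bmatrix}.
\end{align*}
We then split into the two cases $\lambda\neq 0$ and $\lambda=0$.

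\emph{Case $\lambda\neq 0$, $\operatorname{Re}(\lambda)\ge 0$.} The block $-\lambda I_p$ is invertible. Column operations using the middle block column eliminate $-C$, and row operations then leave a block-triangular form. The rank therefore equals $p+\operatorname{rank}\begin{bmatrix} A-\lambda I_n & B\end{bmatrix}$. So full rank $n+p$ at all such $\lambda$ is equivalent to the PBH condition for $(A,B)$ at all unstable $\lambda\neq 0$.

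\emph{Case $\lambda=0$.} The matrix becomes $\begin{bmatrix} A & 0 & B\\ -C & 0 & 0\end{bmatrix}$. After discarding the zero columns and flipping the sign of a row block, its rank equals $\operatorname{rank}\begin{bmatrix} A & B\\ C & 0_{p,m}\end{bmatrix}$. Full rank $n+p$ at $\lambda=0$ is thus exactly condition \eqref{math:second_cond}.

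It remains to assemble the equivalence. For sufficiency, stabilizability of $(A,B)$ covers all $\lambda\neq0$ with $\operatorname{Re}(\lambda)\ge 0$, and \eqref{math:second_cond} covers $\lambda=0$. Hence $(A_a,B_a)$ is stabilizable.

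For necessity, the case $\lambda\neq 0$ yields PBH for $(A,B)$ at all unstable nonzero $\lambda$, and the case $\lambda=0$ yields \eqref{math:second_cond}. The only remaining point is $\lambda=0$ for $(A,B)$, and this is the one step needing care. We note that \eqref{math:second_cond} means the $n+p$ rows of $\begin{bmatrix} A & B\\ C & 0\end{bmatrix}$ are linearly independent. In particular, the first $n$ rows $\begin{bmatrix} A & B\end{bmatrix}$ are linearly independent, so $\operatorname{rank}\begin{bmatrix} A & B\end{bmatrix}=n$. This is the PBH condition for $(A,B)$ at $\lambda=0$, so $(A,B)$ is stabilizable. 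With this observation handled, the rest consists of routine block rank manipulations.
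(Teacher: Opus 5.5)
Your proposal is correct and follows the paper's proof essentially verbatim: the PBH test for $(A_a,B_a)$, the same case split into $\lambda\neq 0$ (rank equals $p+\operatorname{rank}\begin{bmatrix} A-\lambda I_n & B\end{bmatrix}$) and $\lambda=0$ (rank equals that of $\begin{bmatrix} A & B\\ C & 0\end{bmatrix}$). In the necessity direction you also note that \eqref{math:second_cond} forces $\operatorname{rank}\begin{bmatrix} A & B\end{bmatrix}=n$, which gives the PBH condition for $(A,B)$ at $\lambda=0$; the paper leaves this step implicit in its ``combining both cases.''
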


\begin{proof}
   By the Popov--Belevitch--Hautus (PBH) test \cite[Theorem~14.3]{hespanha2018}, $(A_a,B_a)$ is stabilizable if and only if for every eigenvalue $\lambda\in\mathbb{C}$ of $A_a$ with $\operatorname{Re}(\lambda)\geq 0$ the matrix 
   \begin{align}
    M(\lambda):= \! \begin{bmatrix}\lambda I_{n+p} - A_a  & \! B_a \end{bmatrix} = \begin{bmatrix} \lambda I_n -A & 0 & B \\ C & \! \lambda I_p & 0  \end{bmatrix}
   \end{align} 
   has full row rank $n+p$.
   Consider the case $\lambda \neq 0$. 
   Since $\lambda I_p$ is nonsingular, $\operatorname{rank} M(\lambda) = p + \operatorname{rank} \begin{bmatrix}  \lambda I_n - A & B  \end{bmatrix}$. Hence, $\operatorname{rank} M(\lambda) = n+p$ if and only if $\operatorname{rank} \begin{bmatrix}  \lambda I_n - A & B\end{bmatrix} =n$ which is precisely the condition for stabilizability of $(A,B)$. 
 For the case $\lambda =0$, $M(0)$ has full row rank if and only if  $\operatorname{rank}M(0) = \operatorname{rank} \big[\begin{smallmatrix}A & B \\ C & 0 \end{smallmatrix}\big]=n+p$. Combining both cases completes the proof.
\end{proof}



Condition \eqref{math:second_cond} implies that $p\leq m$ and that both $C$ and $B$ must have rank $p$. Thus, to satisfy Assumption~\ref{ass:3}, the number of control inputs must be at least as many as the number of outputs to be tracked. 

\begin{lemma}
   \label{lem:det} 
   The system $(A_a,\sqrt{Q})$ is detectable if and only if $(A,\left[\begin{smallmatrix} C \\ \sqrt{(Q_x)}\end{smallmatrix}\right])$ is detectable. 
\end{lemma}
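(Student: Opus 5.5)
We prove this with the PBH detectability test, mirroring the proof of the preceding lemma on stabilizability. Here $\sqrt{Q}$ is read as $\sqrt{Q_a}=\operatorname{diag}(\sqrt{Q_x},\sqrt{Q_z})$ with $Q_z\succ 0$, so $\sqrt{Q_z}$ is nonsingular. By the dual PBH test, $(A_a,\sqrt{Q_a})$ is detectable if and only if
\begin{align*}
N(\lambda):=\begin{bmatrix}\lambda I_{n+p}-A_a\\ \sqrt{Q_a}\end{bmatrix}=\begin{bmatrix}\lambda I_n-A & 0\\ C & \lambda I_p\\ \sqrt{Q_x} & 0\\ 0 & \sqrt{Q_z}\end{bmatrix}
\end{align*}
has full column rank $n+p$ for every $\lambda$ with $\operatorname{Re}(\lambda)\ge 0$. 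The eigenvalues of $A_a$ are those of $A$ together with $0$ of multiplicity $p$, but the rank condition can be checked for all such $\lambda$.

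The key step is to exploit the nonsingular block $\sqrt{Q_z}$. Suppose $N(\lambda)\begin{bmatrix} v\\ w\end{bmatrix}=0$.

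1. The last block row gives $\sqrt{Q_z}\,w=0$, hence $w=0$.
2. The remaining rows then reduce to $(\lambda I_n-A)v=0$, $Cv=0$ and $\sqrt{Q_x}\,v=0$.

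Thus $N(\lambda)$ has trivial kernel if and only if
\begin{align*}
\begin{bmatrix}\lambda I_n-A\\ C\\ \sqrt{Q_x}\end{bmatrix}
\end{align*}
has full column rank $n$. Unlike the stabilizability lemma, no case split between $\lambda=0$ and $\lambda\neq0$ is needed, because $w$ is eliminated uniformly.

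The last displayed condition, required for all $\operatorname{Re}(\lambda)\ge0$, is exactly the PBH characterization of detectability of $\bigl(A,\left[\begin{smallmatrix} C\\ \sqrt{Q_x}\end{smallmatrix}\right]\bigr)$. This gives both directions of the equivalence at once.

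The main point needing care is the reading of $\sqrt{Q}$ in the statement as $\sqrt{Q_a}$ with block-diagonal square root. This relies on $Q_a=\operatorname{diag}(Q_x,Q_z)$, for which $\sqrt{Q_a}=\operatorname{diag}(\sqrt{Q_x},\sqrt{Q_z})$, and on invoking $Q_z\succ0$. The rest is routine.
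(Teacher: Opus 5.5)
Your proof is correct and follows essentially the same route as the paper: you apply the PBH test to the stacked matrix and use $Q_z\succ 0$ to force the integral-state component of any kernel vector to vanish. This reduces the condition to full column rank of $\left[\begin{smallmatrix}\lambda I_n - A\\ C\\ \sqrt{Q_x}\end{smallmatrix}\right]$ for $\operatorname{Re}(\lambda)\ge 0$. Two points you make explicit are left implicit in the paper: the rank condition may be imposed for all $\operatorname{Re}(\lambda)\ge 0$ rather than only at eigenvalues, and $\sqrt{Q}$ in the statement is to be read as $\sqrt{Q_a}=\operatorname{diag}(\sqrt{Q_x},\sqrt{Q_z})$.
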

\begin{proof}
   By the PBH test \cite[Theorem~16.6]{hespanha2018}, $(A_a,\sqrt{Q})$ is detectable if and only if only if for every eigenvalue $\lambda\in\mathbb{C}$ of $A_a$
    with $\operatorname{Re}(\lambda)\geq 0$ the matrix
   \begin{align}
     N = \begin{bmatrix} A-\lambda I_n & 0 \\ -C  &- \lambda I_p \\ \sqrt{Q_x} & 0 \\ 0 & \sqrt{Q_z }\end{bmatrix}
   \end{align} 
   has full column rank. 
   Let $\begin{bmatrix} x \\z \end{bmatrix}$ lie in the nullspace of $N$. Then, equivalently, $(A-\lambda I_n)x =0, -Cx - \lambda z =0, \sqrt{Q_x} x =0$ and $\sqrt{Q_z}z =0$. Since $Q_z\succ0$, $z=0$, hence $Cx=0$. Thus full column rank of $N$ holds if and only if 
\begin{align}
\operatorname{rank}
\begin{bmatrix}
A-\lambda I_n\\
C\\
\sqrt{Q_x}
\end{bmatrix}
= n
\end{align}
for all $\operatorname{Re} (\lambda)\ge0$, i.e., 
$(A,\left[\begin{smallmatrix} C \\ \sqrt{(Q_x)}\end{smallmatrix}\right])$ is detectable.
\end{proof}

Note that Lemma~\ref{lem:det} states that no eigenvector corresponding to an unstable eigenvalue of $A$ must lie simultaneously in the nullspaces of $C$ and $Q_x$.

\subsection{Relation to Proportional Integral Derivative Control}
In this subsection, we clarify the relation of LQI control to PI(D) control. The PID control law is
\begin{align}
   \label{math:PID} 
 u & = - K_{\mathrm{P}} (y-r) - K_{\mathrm{I}} \int_0^t r(\tau) - y(\tau)\mathrm{d}\tau - K_{\mathrm{D}} \frac{\partial }{\partial t}(y-r)
\end{align} 
where $K_{\mathrm{P}}\in\mathbb{R}^{m\times p}$, $K_{\mathrm{I}}\in\mathbb{R}^{m\times p}$, and $K_{\mathrm{D}}\in\mathbb{R}^{m\times p}$, and falls into the class of dynamic output feedback controllers with feedthrough.
In error coordinates, the control law \eqref{math:PID} can be rewritten as 
\begin{align}
 \tilde u = - K_{\mathrm{P}} C \tilde{x} - K_{\mathrm{I}}  \int_0^t \tilde{y}\mathrm{d}\tau - K_{\mathrm{D}} C \dot{\tilde{x}}. \label{math:PID_law} 
\end{align}
Substituting the system dynamics \eqref{math:sys1a} into \eqref{math:PID_law} yields 
\begin{align} 
 (I_m + K_{\mathrm{D}}C B) \tilde u = - (K_{\mathrm{P}}C + K_{\mathrm{D}}CA )\tilde{ x} - K_{\mathrm{I}} \int_0^t \tilde{y}(\tau)\mathrm{d}\tau. 
\end{align} 
Hence, the PID control law can be represented by a state-feedback controller of the augmented system \eqref{math:sys_a}
\begin{align}
 \tilde{u} = - \tilde{K}_{\mathrm{PD}} \tilde{x} - \tilde{K}_{\mathrm{I}} \tilde{z} = \tilde{K} \tilde x_a,
\end{align} 
where $\tilde{K}=\begin{bmatrix}\tilde{K}_{\mathrm{PD}} & \tilde{K}_{\mathrm{I}}\end{bmatrix}\in\mathbb{R}^{m\times (n+p)}$, and 
\begin{subequations}
   \label{math:nonlinear} 
   \begin{align}
 \tilde{K}_{\mathrm{PD}}&= (I_m + K_{\mathrm{D}} C B)^{-1} (K_{\mathrm{P}} C + K_{\mathrm{D}} C A ),\\
     \tilde{K}_{\mathrm{I}}&= (I_m + K_{\mathrm{D}} C B)^{-1} K_{\mathrm{I}}.
\end{align} 
\end{subequations}
Note that the derivative action in the original control law \eqref{math:PID_law} is absorbed into the proportional feedback gain $\tilde K_{\mathrm{PD}}$ of the augmented system \eqref{math:sys_a}.
This also explains the presence of the derivative component in $K_{\mathrm{PD}}$ in \eqref{math:law_LQI}.
Following \cites{mukhopadhyay1978pid},  stabilizing PID gains can be obtained by first  solving the LQR problem for the augmented system to obtain $\tilde{K}$ and subsequently reconstructing the original PID gains by solving the nonlinear equations \eqref{math:nonlinear}.
However, the nonlinear mapping \eqref{math:nonlinear} introduces significant difficulties, as there may exist a finite number, infinitely many, or no PID gains satisfying the equations.

We now consider the PI case, obtained by setting $K_{\mathrm{D}} = 0$.
Analogous to \eqref{math:closed_aug}, the augmented closed-loop system under \eqref{math:PID} with $K_{\mathrm{D}} =0$ in error coordinates is 
\begin{align}
  \dot{\tilde x}_a & = A_a \tilde{x}_a - B_a \hat{K} \tilde{x}_a  \\
  & = \begin{bmatrix} A - B K_{\mathrm{P}} C & - B K_{\mathrm{I}} \\ -C & 0 \end{bmatrix}\tilde{x}_a, \label{math:closed_matrix} 
\end{align} 
where $\hat{K}=\begin{bmatrix}  K_{\mathrm{P}} C & K_{\mathrm{I}}\end{bmatrix} \in\mathbb{R}^{m \times (n+p)}$. Hence, PI control can be interpreted as a structurally constrained state-feedback controller for the augmented system \eqref{math:sys_a}. 
The gain block acting on the original state $\tilde{x}$ must factor as $ K_{\mathrm{P}}C $, which restricts it to the row space of $C$. In contrast, the optimal  LQI gain $K^* = \begin{bmatrix} K_{\mathrm{PD}}^* & K_{\mathrm{I}}^* \end{bmatrix}$ obtained from the LQR problem for $(A_a,B_a)$ generally does not admit such a factorization.\footnote{We note that if $C$ is square and invertible ($p=n$), the constraint $K_{\mathrm{PD}} = K_{\mathrm{P}}C$ imposes no restriction, and PI control recovers full-state feedback LQI control.} Unless $K_{\mathrm{PD}}^*$ happens to lie in the row space of $C$, the achievable performance under PI control is strictly inferior to that of the full-state-feedback LQI design.

This structural restriction has significant consequences for controller synthesis. The set of stabilizing PI gains may fail to be path-connected \cite[Section~4.3]{datta1999structure}, precluding the usage of policy gradient methods. 
Moreover, there exists no convex reparametrization 
of the problem \eqref{math:convex} when using the closed-loop matrix of \eqref{math:closed_matrix}.
This difficulty is well documented in the literature and reflects the inherent challenge of synthesizing fixed-structure output feedback controllers such as PI(D) control \cite{datta1999structure}.

\subsection{Data-driven Parameterization}
Next, we introduce a data-driven closed-loop parameterization of the augmented system \eqref{math:closed_aug}. Analogous to \eqref{math:data} and \eqref{math:data_cov}, we define the data matrices
\begin{align}
   Y & = \begin{bmatrix}  y(0) &  y(\Delta) & \dots &  y((T-1)\Delta) \end{bmatrix}\in\mathbb{R}^{p\times T},\\
   \overline{Y} & = \frac{1}{T}Y \begin{bmatrix} U \\ X  \end{bmatrix}^\top \in\mathbb{R}^{p\times (n+m)},
\end{align}
which satisfy $\overline{Y} = C \overline{X}$.
For the integral variant mentioned in Remark~\ref{remark:1}, $Y$ has to be substituted by 
\begin{align} 
   \label{math:data_Y} 
 \tilde{Y} &=  \begin{bmatrix} \int_{t_1}^{t_1+\delta }y(\tau)\mathrm{d}\tau & \dots &\int_{t_T}^{t_T+\delta }y(\tau)\mathrm{d}\tau \end{bmatrix}\in\mathbb{R}^{p\times T}.
\end{align} 

The following assumption ensures that the data fully characterizes the system and holds throughout the paper.
\begin{assumption}
   \label{ass:2} 
   The matrix $\begin{bmatrix} \overline{U} \\ \overline{X}\end{bmatrix}$ has rank $n+m$.
\end{assumption}

By \cite[Lemma~4]{Lopez2023}, Assumption~\ref{ass:2} holds under piece-wise constant inputs that are persistently exciting of order $n+1$, which requires $T\geq (m+1)n+m$ samples. Hence, if the order of the system dynamics is known, Assumption~\ref{ass:2} can be enforced a priori through a suitable choice of the excitation input before conducting the experiment.

\begin{theorem}
   \label{th:data} 
   The augmented closed-loop system \eqref{math:closed_aug} admits the equivalent data-driven representation
   \begin{subequations}
      \label{math:data_cl} 
      \begin{align}
    \dot{\tilde{x}}_a &= \begin{bmatrix} \overline{X}' \\ -\overline{Y}\end{bmatrix} G \tilde{x}_a, \\
    \begin{bmatrix}I_n & 0_{n,p} \\ -K_{\text{PD}} & - K_{\text{I}}\end{bmatrix} & = \begin{bmatrix} \, \overline X \,\, \\ \overline{U}\end{bmatrix} G, \label{math:cl_2} 
   \end{align} 
   \end{subequations}
   where $G\in\mathbb{R}^{(n+m)\times (n+p)}$. Particularly, $K = - \overline{U}G$.
\end{theorem}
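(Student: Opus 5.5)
The plan is to mirror the argument behind the data-driven closed-loop parameterization of the preceding theorem, now applied to the augmented dynamics \eqref{math:closed_aug}. It relies on two consequences of the data equations. The first is the stacked relation \eqref{math:sys_stacked}. Multiplying it on the right by $\tfrac{1}{T}\left[\begin{smallmatrix} U\\ X\end{smallmatrix}\right]^\top$ gives, with the normalization of \eqref{math:data_cov} understood consistently for $\overline{X}'$,
\begin{align*}
\overline{X}' = A\overline{X} + B\overline{U}.
\end{align*}
The second is the identity $\overline{Y} = C\overline{X}$ already noted in the paper.

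The first step is the forward direction: given any $G$ satisfying \eqref{math:cl_2}, show that the data-based matrix equals $A_a - B_a K$. Split \eqref{math:cl_2} column-blockwise as $G = \begin{bmatrix} G_1 & G_2\end{bmatrix}$, with $G_1\in\mathbb{R}^{(n+m)\times n}$ and $G_2\in\mathbb{R}^{(n+m)\times p}$. Then \eqref{math:cl_2} reads
\begin{align*}
\overline{X}G_1 = I_n,\quad \overline{X}G_2 = 0,\quad \overline{U}G_1=-K_{\mathrm{PD}},\quad \overline{U}G_2 = -K_{\mathrm{I}}.
\end{align*}
Next compute the top block:
\begin{align*}
\overline{X}'G = A\overline{X}G + B\overline{U}G = A\begin{bmatrix} I_n & 0\end{bmatrix} + B\begin{bmatrix} -K_{\mathrm{PD}} & -K_{\mathrm{I}}\end{bmatrix} = \begin{bmatrix} A-BK_{\mathrm{PD}} & -BK_{\mathrm{I}}\end{bmatrix}.
\end{align*}
The bottom block is
\begin{align*}
-\overline{Y}G = -C\overline{X}G = \begin{bmatrix} -C & 0_p\end{bmatrix}.
\end{align*}
Stacking the two blocks reproduces exactly the closed-loop matrix in \eqref{math:closed_aug}. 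The identity $K = -\overline{U}G$ is just the second block row of \eqref{math:cl_2}.

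The second step is the converse: for every gain $K=\begin{bmatrix}K_{\mathrm{PD}} & K_{\mathrm{I}}\end{bmatrix}$ there exists a $G$ satisfying \eqref{math:cl_2}. This is where Assumption~\ref{ass:2} enters. The matrix $\left[\begin{smallmatrix}\overline{U}\\ \overline{X}\end{smallmatrix}\right]$ has $n+m$ rows and rank $n+m$, so it has full row rank and hence a right inverse. Reordering the rows to $\left[\begin{smallmatrix}\overline{X}\\ \overline{U}\end{smallmatrix}\right]$ does not change this. We can therefore take
\begin{align*}
G = \begin{bmatrix}\overline{X}\\ \overline{U}\end{bmatrix}^\dagger \begin{bmatrix} I_n & 0_{n,p}\\ -K_{\mathrm{PD}} & -K_{\mathrm{I}}\end{bmatrix},
\end{align*}
which solves \eqref{math:cl_2} exactly. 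The two directions together give equivalence: the set of closed-loop matrices $\left[\begin{smallmatrix}\overline{X}'\\ -\overline{Y}\end{smallmatrix}\right]G$ with $G$ constrained by \eqref{math:cl_2} coincides with $\{A_a - B_aK\}$. Moreover, the gain is recovered as $K=-\overline{U}G$.

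I expect no deep obstacle here. The only points needing care are bookkeeping. One is checking the normalization of $\overline{X}'$, since \eqref{math:data_cov} omits the $1/T$; the factor must match that of $\overline{X},\overline{U}$ for the identity $\overline{X}' = A\overline{X}+B\overline{U}$ to hold. The other is noting that the integral-data variant of Remark~\ref{remark:1}, with $\tilde Y$ from \eqref{math:data_Y}, still satisfies $\overline{Y}=C\overline{X}$. This follows by linearity of integration, since $\int y = C\int x$. With both points checked, the argument above carries over verbatim.
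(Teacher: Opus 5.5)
Your proposal is correct and is essentially the paper's proof. The identity $\left[\begin{smallmatrix}\overline{X}' \\ -\overline{Y}\end{smallmatrix}\right] = \left[\begin{smallmatrix} A & B \\ -C & 0_{p,m}\end{smallmatrix}\right]\left[\begin{smallmatrix}\overline{X} \\ \overline{U}\end{smallmatrix}\right]$ turns any solution $G$ of \eqref{math:cl_2} into $A_a - B_a K$. Assumption~\ref{ass:2} then makes \eqref{math:cl_2} solvable for every $K$: you write down an explicit right inverse where the paper cites Rouch\'e--Capelli, and since the matrix is square it is in fact invertible, so $G$ is unique.

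Your caveat about the missing factor $1/T$ in the definition of $\overline{X}'$ in \eqref{math:data_cov} is well taken. Without it the data identity, which the paper's proof also uses tacitly, holds only up to a factor $T$ on the top block.
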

\begin{proof}
   The closed-loop matrix of \eqref{math:closed_aug} is 
\begin{align}
   A_{\mathrm{cl}}  = \begin{bmatrix} A & B \\ -C & 0_{p,m} \end{bmatrix} \begin{bmatrix} I_n & 0_{n,p} \\ - K_{PD} & -K_I\end{bmatrix}.
\end{align} 
Moreover, due to \eqref{math:sys_stacked} and $\overline{Y} \! \!= \! C \overline{X}$, the measured data~satisfy
\begin{align}
 \begin{bmatrix}\overline{X}' \\ -\overline{Y}\end{bmatrix} = \begin{bmatrix} A & B \\ -C & 0_{p,m} \end{bmatrix} \begin{bmatrix}\, \overline X \,\, \\ \overline{U}\end{bmatrix}. \label{math:data_1} 
\end{align}
By Assumption~\ref{ass:2} and the Rouché--Capelli theorem~\cite[Th.~2.38]{shafarevich2012linear}, the linear system \eqref{math:cl_2} is consistent. Therefore, for any $K = \begin{bmatrix}K_{\mathrm{PD }}& K_{\mathrm{I}}\end{bmatrix}  \in \mathbb{R}^{m \times (n+p)}$, there exists a $G$ satisfying \eqref{math:cl_2}.
Hence,%
\begin{align}
 A_{\mathrm{cl}} \overset{ \eqref{math:cl_2} }{=} \begin{bmatrix} A & B \\ -C & 0_{p,m} \end{bmatrix}   \begin{bmatrix} \overline X \\ \overline{U}\end{bmatrix} G \overset{\eqref{math:data_1}}{=}  \begin{bmatrix} \overline{X}' \\ -\overline{Y}\end{bmatrix} G.
\end{align} 
Finally, 
$\overline{U} G = \begin{bmatrix} - K_{\mathrm{PD}} & - K_{\mathrm{I}}\end{bmatrix} = -K.$
\end{proof}



Hence, every closed-loop matrix $A_a-B_aK$ admits a data-driven parameterization through some $G$ satisfying $\begin{bmatrix}I_n & 0_{n,p}\end{bmatrix} = \overline{X} G$. 


\begin{remark}
   Note that the augmented closed-loop \eqref{math:closed_aug} can be represented by data matrices $\overline{X},\overline{U},\overline{X}'$, and $\overline{Y}$ of the underlying system \eqref{math:sys1} without measuring the integral state $\tilde{z}$. 
   Hence, the controller $K$ can be designed based on data, but the control law in original coordinates  
   $u= -K_{\mathrm{PD}}x - K_{\mathrm{I}}\int_0^t r(\tau)-y(\tau) \mathrm{d}\tau$ 
   requires an integrator.
\end{remark}

\subsection{Convex Program}
Next, we adapt the convex program \eqref{math:convex} to the data-driven representation \eqref{math:data_cl} of the augmented system \eqref{math:sys_a}.

\begin{theorem}
   \label{th:conv} 
   The optimal LQR feedback gain of system \eqref{math:sys_a} is  $K^* = -\overline{U} Z^*(W^*)^{-1}$, where $Z^*\in\mathbb{R}^{(n+m)\times (n+p)}$ and $W^*\in\mathbb{R}^{(n+p)\times (n+p)}$ minimize the convex program 
   \begin{subequations}
      \label{math:convex_data} 
       \begin{align}
      \min_{W,Z,S}~& \operatorname{tr}\left(Q_a W  \right) + \operatorname{tr}\left(S  \right)\\
      \text{s.t.}~ & \begin{bmatrix} S & R^{1/2} \overline{U} Z \\ Z^\top \overline{U}^\top R^{1/2} & W \end{bmatrix}\succeq 0, \label{math:c11}  \\
    & \begin{bmatrix} \overline{X}' \\ -\overline{Y }\end{bmatrix} Z + Z^\top \begin{bmatrix} \overline{X}' \\ -\overline{Y }\end{bmatrix}^\top + I_n \preceq 0,\\
    & \begin{bmatrix} I_{n} & 0_{n,p}\end{bmatrix} W = \overline{X} Z, \\
    & W\succeq 0,
   \end{align} 
   with $S = S^\top \in\mathbb{R}^m$.
   \end{subequations}
\end{theorem}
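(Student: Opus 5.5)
The plan is to reduce the data-driven program \eqref{math:convex_data} to the model-based program \eqref{math:convex} for the augmented pair $(A_a,B_a)$ with weights $Q_a,R$. We then invoke the known fact that \eqref{math:convex} is minimized by the LQR gain; this holds under Assumption~\ref{ass:3} and was stated in the preliminaries, citing \cite{Feron}. The bridge is the change of variables $Z = GW$, where $G$ is the data parameterization of Theorem~\ref{th:data}. Note that the identity in the Lyapunov constraint should read $I_{n+p}$.

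\textbf{Step 1 (data constraints become model constraints).} Let $(W,Z,S)$ be feasible for \eqref{math:convex_data} with $W\succ 0$, and set $G := ZW^{-1}$ and $K := -\overline{U}G$. From $\begin{bmatrix} I_n & 0_{n,p}\end{bmatrix}W = \overline{X}Z$ we obtain $\overline{X}G = \begin{bmatrix} I_n & 0\end{bmatrix}$, so \eqref{math:cl_2} holds for this $K$. Identity \eqref{math:data_1} then gives
\[
\begin{bmatrix}\overline{X}'\\ -\overline{Y}\end{bmatrix} Z = (A_a - B_a K)W .
\]
Hence the Lyapunov constraint becomes $(A_a-B_aK)W + W(A_a-B_aK)^\top + I \preceq 0$.

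Next apply a Schur complement to \eqref{math:c11}, using $W\succ0$:
\[
S \succeq R^{1/2}\overline{U} Z W^{-1} Z^\top \overline{U}^\top R^{1/2} = R^{1/2} K W K^\top R^{1/2}.
\]
Therefore $\operatorname{tr}(S) \ge \operatorname{tr}(K^\top R K W)$. Conversely, take any pair $(K,W)$ feasible for the model program, written in the dual form with $(A_a-B_aK)W + W(A_a-B_aK)^\top + I\preceq0$. Theorem~\ref{th:data} together with Assumption~\ref{ass:2} supplies a $G$ satisfying \eqref{math:cl_2}. Setting $Z=GW$ and $S = R^{1/2}KWK^\top R^{1/2}$ yields a feasible point of \eqref{math:convex_data} with equal cost.

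\textbf{Step 2 (equivalence of optima).} Step 1 shows that the optimal values coincide. It also shows that any minimizer of \eqref{math:convex_data} yields a minimizer $(K,W)$ of the model program, with $K = -\overline{U}Z^*(W^*)^{-1}$. By \cite{Feron}, that minimizer has $K=K^*$, which proves the claim.

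\textbf{Main obstacle: invertibility of $W$.} The constraint only imposes $W\succeq0$, while Step 1 needs $W\succ0$. I would settle this first, from the Lyapunov inequality: if $Wv=0$ for some $v\neq 0$, then multiplying $\begin{bmatrix}\overline{X}'\\-\overline{Y}\end{bmatrix}Z + Z^\top\begin{bmatrix}\overline{X}'\\-\overline{Y}\end{bmatrix}^\top + I \preceq 0$ on both sides by $v$ does not immediately work, since $Z$ is not multiplied by $W$. I would therefore first show, via $\begin{bmatrix} I & 0\end{bmatrix}W=\overline{X}Z$ and the rank assumption, that $\begin{bmatrix}\overline{X}'\\-\overline{Y}\end{bmatrix}Z = A_a W + \begin{bmatrix}B\\0\end{bmatrix}\overline{U}Z$. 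This writes the inequality as $A_aW+WA_a^\top + B_aL + L^\top B_a^\top + I\preceq 0$ with $L=\overline{U}Z$.

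Now take $v\in\ker W$. The quadratic form gives $2v^\top B_a L v + \|v\|^2 \le 0$, which forces $Lv\ne0$. Combined with the Schur constraint, which forces $\ker W \subseteq \ker (R^{1/2}L)$ whenever $S$ is finite, this yields $Lv=0$ and hence a contradiction. So every feasible $W$ is positive definite, and the reparameterization in Step 1 is valid.

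Uniqueness of the minimizer, or the fact that any minimizer yields the same $K^*$, I would take from \cite{Feron}, given Assumption~\ref{ass:3}.
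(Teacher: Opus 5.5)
Your proposal is correct and follows the same route as the paper: substitute the data parameterization of Theorem~\ref{th:data} into the model-based program, change variables via $Z=GW$, and bound the nonconvex term $\operatorname{tr}(R^{1/2}\overline{U}ZW^{-1}Z^\top\overline{U}^\top R^{1/2})$ by an epigraph variable $S$ through a Schur complement, which recovers $K^*=-\overline{U}Z^*(W^*)^{-1}$. You are more careful than the paper in two places: you establish the feasibility correspondence in both directions, and you prove that every feasible $W$ is positive definite (your argument via $\ker W\subseteq\ker(R^{1/2}\overline{U}Z)$ is sound), whereas the paper only asserts $W^*=(P^*)^{-1}\succ 0$, an identification that does not hold in general for this controllability-type constraint (for the scalar LQR instance $A=0$, $B=Q=R=1$ one gets $P^*=1$ but optimal $W=1/2$).
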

\begin{proof}
   Substituting \eqref{math:data_cl} into \eqref{math:convex} and using  the cyclic property of the trace operator yields%
   \begin{subequations}
      \begin{align}
 \min_{W,G}~& \operatorname{tr }(Q W ) + \operatorname{tr }(R^{1/2} \overline{U}G W G^\top \overline{U}^\top  R^{1/2}) \\
 \text{s.t.}~& \begin{bmatrix}\overline{X}' \\ -\overline{Y}\end{bmatrix} G W + W \Bigl(  \begin{bmatrix} \overline{X}' \\ -\overline{Y }\end{bmatrix} G\Bigr)^\top + I_{n+p} \preceq 0, \label{math:c1}  \\
 & \begin{bmatrix}I_n & 0_{n,p}\end{bmatrix} = \overline{X} G, \\
 & W \succeq 0.
\end{align} 
   \end{subequations}
By substituting $Z  = G W $, the constraint \eqref{math:c1} becomes affine in $(Z,W)$. However, the cost term $ \operatorname{tr}\!\big(R^{1/2} \overline{U} Z W^{-1} Z^\top \overline{U}^\top R^{1/2}\big)$ remains nonconvex. To obtain the convex program \eqref{math:convex_data}, we introduce the epigraph variable $S = S^\top$ and upper bound this term as
\begin{align*}
\operatorname{tr}\!\big(R^{1/2} \overline{U} Z W^{-1} Z^\top \overline{U}^\top R^{1/2}\big) \le \operatorname{tr}(S),
\end{align*}
which is enforced via \eqref{math:c11},
obtained by a Schur complement argument. The relation $G^*  =  Z^*(W^*)^{-1}$ follows from the substitution and $W^*= (P^*)^{-1} \succ  0$, and the optimal gain is recovered as $K^* = -\overline{U} Z^*(W^*)^{-1}$.
\end{proof}

Note that every feasible solution of the convex program \eqref{math:convex_data} yields a feedback gain such that $A_a-B_a K$ is Hurwitz.

\subsection{Projected Policy Gradient Flow}
In this subsection, we formulate the projected policy gradient flow for the data-driven representation \eqref{math:data_cl}.
Let $\mathcal{G}$ denote the set of all $G\in\mathbb{R}^{(n+m)\times(n+p)}$ satisfying \eqref{math:cl_2} and rendering the closed-loop matrix Hurwitz, i.e., 
\begin{align}
   \label{math:G_set} 
   \mathcal{G} &:=
\left\{
G \in \mathbb{R}^{(n+m)\times(n+p)} \;\middle|\;
\begin{aligned}
& \overline{X}G =
\begin{bmatrix}
I_n & 0_{n,p}
\end{bmatrix}, \\
& \begin{bmatrix}
\overline{X}' \\
-\overline{Y}
\end{bmatrix} G \text{ is Hurwitz}
\end{aligned}
\right\} \\
& \; = \left\{G \in \mathbb{R}^{(n+m)\times(n+p)} \mid \exists K\in\mathcal{K} \text{ s.t. } \eqref{math:cl_2}\right\}.
\end{align}

Analogous to the model-based case in \cite{bu2020clqr}, the LQR cost function of the system \eqref{math:data_cl} parameterized in terms of $G$ is defined as the matrix function 
\begin{align}
   \label{math:lqr_cost} 
 f_G:\mathcal{G}\to\mathbb{R},~~ G\mapsto \operatorname{tr}(P_G),
\end{align} 
where $P_G\in\mathbb{R}^{n+p}$ is the solution of the Lyapunov Equation
\begin{align}
   \label{math:lyap_P} 
 0 & = \!\Bigl(\begin{bmatrix} \overline{X}' \\ -\overline{Y }\end{bmatrix}G\Bigr)^{\!\!\top} P_G + P_G \begin{bmatrix} \overline{X}' \\- \overline{Y }\end{bmatrix}G  + Q_a + (\overline{U}G)^\top R (\overline{U}G).
\end{align} 
The function $f_G$ exhibits favorable analytical properties for the well-definedness of its gradient flow. In particular, over $\mathcal{G}$, $f_G$ is real analytic and coercive. Real analyticity ensures smoothness of arbitrary order, while coercivity ensures that $f_G \to \infty$ as $G\to\partial \mathcal{G}$, implying compact sublevel sets.

\begin{proposition}
   Let $G\in\mathcal{G}$. Then, the gradient of the LQR cost \eqref{math:lqr_cost} with respect to $G$ is 
   \begin{align}
    \nabla f_G = 2 (\overline{U}^\top R \overline{U}G + \begin{bmatrix}\overline{X}'\\ -\overline{Y} \end{bmatrix}^\top P_G)W_G, 
    \label{math:grad_} 
   \end{align} 
   where $P_G$ and $W_G = W_G^\top \in\mathbb{R}^{n+p}$ satisfy \eqref{math:lyap_P} and 
   \begin{align}
      \label{math:lyap_W} 
    0 & = \begin{bmatrix}\overline{X}'\\ \overline{Y} \end{bmatrix} G W_G + W_G \bigl(\begin{bmatrix}\overline{X}'\\ \overline{Y} \end{bmatrix} G\bigr)^\top + I_{n+p},
   \end{align} 
   respectively.
\end{proposition}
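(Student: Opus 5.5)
The plan is a first-order perturbation argument on the Lyapunov equation \eqref{math:lyap_P}, followed by the standard adjoint (duality) trick that turns $\operatorname{tr}(\mathrm{d}P_G)$ into a trace against $W_G$. Throughout, abbreviate $H := \left[\begin{smallmatrix}\overline{X}' \\ -\overline{Y}\end{smallmatrix}\right]$ and $A_G := HG$, which is Hurwitz for $G\in\mathcal{G}$. I read the matrix in \eqref{math:lyap_W} as this same $H$, i.e.\ with $-\overline{Y}$; this is what Theorem~\ref{th:data} and the proof of Theorem~\ref{th:conv} use for the closed-loop matrix, and I take the sign in \eqref{math:lyap_W} to be a typo. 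The gradient is computed in the ambient space $\mathbb{R}^{(n+m)\times(n+p)}$ with the Frobenius inner product $\langle X,Y\rangle=\operatorname{tr}(X^\top Y)$. Any projection onto the affine constraint $\overline{X}G=[I_n~0_{n,p}]$ belongs to the later projected flow and is not part of this proposition.

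First, I establish well-definedness and differentiability. Because $A_G$ is Hurwitz, the Lyapunov operator $P\mapsto A_G^\top P+PA_G$ is invertible, since its eigenvalues are $\lambda_i+\lambda_j$ with strictly negative real parts. Hence $P_G$ and $W_G$ exist and are unique. The coefficients of \eqref{math:lyap_P} are polynomial in $G$, so the implicit function theorem shows that $G\mapsto P_G$ is smooth (indeed analytic) on the open set where $HG$ is Hurwitz. This justifies differentiating the identity \eqref{math:lyap_P} along a direction $\mathrm{d}G$. Doing so gives
\begin{align*}
0 = A_G^\top \mathrm{d}P + \mathrm{d}P\,A_G + M, \qquad M := (H\mathrm{d}G)^\top P_G + P_G H\mathrm{d}G + \mathrm{d}G^\top \overline{U}^\top R\overline{U}G + G^\top\overline{U}^\top R\overline{U}\,\mathrm{d}G .
\end{align*}

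Second, I apply the adjoint trick. From \eqref{math:lyap_W} we have $I=-(A_GW_G+W_GA_G^\top)$. Substituting this and using the cyclic property of the trace,
\begin{align*}
\operatorname{tr}(\mathrm{d}P) = -\operatorname{tr}\big(\mathrm{d}P(A_GW_G+W_GA_G^\top)\big) = -\operatorname{tr}\big((A_G^\top\mathrm{d}P+\mathrm{d}P A_G)W_G\big) = \operatorname{tr}(MW_G).
\end{align*}
Next I expand $\operatorname{tr}(MW_G)$ term by term. Since $W_G$ and $P_G$ are symmetric, each transposed pair contributes equally. This gives
\begin{align*}
\operatorname{tr}(MW_G) = 2\operatorname{tr}\big(W_GP_GH\,\mathrm{d}G\big) + 2\operatorname{tr}\big(W_GG^\top\overline{U}^\top R\overline{U}\,\mathrm{d}G\big) = \big\langle 2(H^\top P_G + \overline{U}^\top R\overline{U}G)W_G,\ \mathrm{d}G\big\rangle .
\end{align*}
Reading off the Riesz representative gives \eqref{math:grad_}.

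The computation itself is routine. The only delicate point is the justification step: invertibility of the Lyapunov operator, which gives uniqueness and differentiability of $P_G$, together with the openness of the Hurwitz condition in $G$. With these in place, all first-order manipulations above are legitimate and the proposition follows.
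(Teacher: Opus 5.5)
Your proposal is correct and follows essentially the paper's route: differentiate \eqref{math:lyap_P} to get a Lyapunov equation for $\mathrm{d}P_G$ forced by $F+F^\top$ with $F=\mathrm{d}G^\top(\overline{U}^\top R\overline{U}G+H^\top P_G)$, then turn $\operatorname{tr}(\mathrm{d}P_G)$ into $2\operatorname{tr}(FW_G)$. The paper does that last step through the integral representations of $\mathrm{d}P_G$ and $W_G$, whereas you use the equivalent duality identity $I_{n+p}=-(A_GW_G+W_GA_G^\top)$ directly; your reading of \eqref{math:lyap_W} with $-\overline{Y}$ matches the Gramian the paper actually integrates, and your implicit-function-theorem argument supplies the smoothness of $G\mapsto P_G$ that the paper only asserts.
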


\begin{proof}
   The function \eqref{math:lqr_cost} is a composition of smooth functions, i.e., $G\mapsto P_G \mapsto \operatorname{tr}(P_G)$. The differential of \eqref{math:lyap_P} with respect to $G$ is given by 
   \begin{align}
    0 & = \Bigl(\begin{bmatrix} \overline{X}' \\ -\overline{Y }\end{bmatrix}G\Bigr)^{\!\!\top} dP_G +  dP_G \Bigl( \begin{bmatrix} \overline{X}' \\- \overline{Y }\end{bmatrix}G \Bigr)  + F^\top + F, \label{math:LyapB}  
   \end{align} 
   where $F= dG^\top (\overline{U}^\top R \overline{U}G + \left[\begin{smallmatrix}\overline{X}'\\ -\overline{Y} \end{smallmatrix}\right]^\top P_G)$.\footnote{For matrix differentials, see \cite[Section~1]{minka1997old}.}
   Since $G\in\mathcal{G}$, the solution to \eqref{math:LyapB} is 
   \begin{align}
    dP_G = \int_0^\infty   e^{\big(\left[\begin{smallmatrix}\overline{X}'\\ -\overline{Y} \end{smallmatrix}\right] G\big)^{\top} t}  ( F + F^\top  )     
    e^{ \left[\begin{smallmatrix}\overline{X}'\\ -\overline{Y} \end{smallmatrix}\right] G t }
    \mathrm{d}t.
   \end{align} 
   Thus, by using the cyclic invariance of the trace operator, the differential of the LQR cost is 
   \begin{align}
    df_G &= \operatorname{ tr}(dP_G ) = 2  \operatorname {tr} \big( F \int_0^\infty e^{ \left[\begin{smallmatrix}\overline{X}'\\ -\overline{Y} \end{smallmatrix}\right] G t }  e^{\big(\left[\begin{smallmatrix}\overline{X}'\\ -\overline{Y} \end{smallmatrix}\right] G\big)^{\top} t} \mathrm{d}t  \big). \label{math:int} 
   \end{align}  
   The integral in \eqref{math:int} is the solution to \eqref{math:lyap_W}, yielding $df_G = 2 \operatorname{tr}(F W_G )$, implying \eqref{math:grad_}.
\end{proof}

The following lemma establishes uniqueness of the stationary point of the cost function \eqref{math:lqr_cost} on its domain.

\begin{lemma}
   \label{lem:unique} 
   The function $f_G$ admits a unique stationary point $G^*$ on $\mathcal{G}$. The corresponding feedback gain $K^* = - \overline{U} G^*$ is the optimal LQR gain of the system \eqref{math:sys_a}.
\end{lemma}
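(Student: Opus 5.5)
The plan is to reduce the statement to the known model-based result for $f_K$ on $\mathcal{K}$ by showing that $\mathcal{G}$ and the set of stabilizing gains for $(A_a,B_a)$ are in affine bijection.

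\emph{Bijection.} By Assumption~\ref{ass:2}, the square matrix $\left[\begin{smallmatrix}\overline{X}\\ \overline{U}\end{smallmatrix}\right]\in\mathbb{R}^{(n+m)\times(n+m)}$ is invertible. Hence \eqref{math:cl_2} determines $G$ uniquely from $K$:
\begin{align*}
G=\phi(K):=\begin{bmatrix}\overline{X}\\ \overline{U}\end{bmatrix}^{-1}\begin{bmatrix}I_n & 0_{n,p}\\ -K_{\mathrm{PD}} & -K_{\mathrm{I}}\end{bmatrix}.
\end{align*}
Its inverse is $K=-\overline{U}G$. By Theorem~\ref{th:data}, $\left[\begin{smallmatrix}\overline{X}'\\ -\overline{Y}\end{smallmatrix}\right]G=A_a-B_aK$. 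So $\phi$ is an affine bijection from the stabilizing set $\mathcal{K}_a$ of $(A_a,B_a)$ onto $\mathcal{G}$. Note that $\mathcal{G}$ is an open subset of the affine space $\{G:\overline{X}G=[I_n\ 0_{n,p}]\}$, whose tangent space is $\ker$ of the map $\Delta\mapsto\overline{X}\Delta$.

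\emph{Cost correspondence.} Substituting these identities into \eqref{math:lyap_P} turns it into the model-based Lyapunov equation for $P_K$ with weights $Q_a,R$. Therefore $P_G=P_{K}$ with $K=-\overline{U}G$, and $f_G=f_K\circ\phi^{-1}$ on $\mathcal{G}$.

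\emph{Stationarity.} Because $\mathcal{G}$ lies in an affine subspace, I read ``stationary'' as vanishing of the directional derivative along every admissible direction $\Delta$ with $\overline{X}\Delta=0$. Equivalently, the projection of $\nabla f_G$ onto that tangent space is zero. By the chain rule,
\begin{align*}
df_G[\Delta]=\operatorname{tr}\big(\nabla f_K^\top(-\overline{U}\Delta)\big).
\end{align*}
By invertibility of $\left[\begin{smallmatrix}\overline{X}\\ \overline{U}\end{smallmatrix}\right]$, the map $\Delta\mapsto-\overline{U}\Delta$ sends $\ker$ of $\Delta\mapsto\overline{X}\Delta$ onto all of $\mathbb{R}^{m\times(n+p)}$. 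Hence $G$ is stationary if and only if $\nabla f_K=0$ at $K=-\overline{U}G$.

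\emph{Model-based uniqueness.} From \eqref{math:grad}, $\nabla f_K=2(RK-B_a^\top P_K)W_K$. For $K\in\mathcal{K}_a$ we have $W_K\succ0$, so $\nabla f_K=0$ forces $K=R^{-1}B_a^\top P_K$. Substituting this into the Lyapunov equation for $P_K$ shows that $P_K$ is a stabilizing solution of the CARE for $(A_a,B_a,Q_a,R)$. Under Assumption~\ref{ass:3}, that solution is unique and equals $P^*$, so $K=K^*$. Conversely, $K^*$ is stationary. Thus $G^*=\phi(K^*)$ is the unique stationary point of $f_G$ on $\mathcal{G}$, and $-\overline{U}G^*=K^*$. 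This is also consistent with \cite{bu2020clqr}.

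\emph{Expected obstacle.} The main subtlety is the notion of stationarity. If $\nabla f_G$ in \eqref{math:grad_} is read as an unconstrained gradient, its vanishing need not be the right condition. The argument must therefore be phrased with the projected gradient onto $\ker$ of $\Delta\mapsto\overline{X}\Delta$, and the surjectivity of $\Delta\mapsto\overline{U}\Delta$ on that kernel is the key step. A second point to check is the CARE uniqueness step with $Q_a$ only positive semidefinite. Assumption~\ref{ass:3} (via Lemma~\ref{lem:det}) supplies the detectability needed for this.
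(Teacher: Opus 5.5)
Your proof is correct. It reaches the same key identity $RK=B_a^\top P$ as the paper, but by a different and more careful route.

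\textbf{The paper's route.} The paper stays in data coordinates. It sets the ambient gradient \eqref{math:grad_} to zero, cancels $W_G\succ0$, substitutes \eqref{math:data_1}, and left-multiplies by the inverse of $\left[\begin{smallmatrix}\overline{X}\\ \overline{U}\end{smallmatrix}\right]^\top$. It then keeps only the second block row $RK=B_a^\top P_G$ and concludes from uniqueness of the Lyapunov solution. The CARE-uniqueness step is left implicit, as is the uniqueness of $G^*$, which really comes from the bijection $K\leftrightarrow G$.

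\textbf{Your route.} You pull $f_G$ back to the model-based $f_K$ through the affine bijection $\phi$. You transfer stationarity by the chain rule along $\ker$ of $\Delta\mapsto\overline{X}\Delta$, using surjectivity of $\Delta\mapsto\overline{U}\Delta$ on that kernel. You then apply the known model-based characterization, so uniqueness of $G^*$ is automatic.

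\textbf{The subtlety you flagged.} Your route also settles it. The paper's computation produces a first block row $\left[\begin{smallmatrix}A^\top & -C^\top\end{smallmatrix}\right]P_G=0$, i.e. $A_a^\top P_G=0$, which it silently discards. That row cannot hold anywhere on $\mathcal{G}$. It gives $\operatorname{range}(A_a)\subseteq\ker P_G\subseteq\ker Q_a$, where the last inclusion follows from \eqref{math:lyap_P}. Because $Q_z\succ0$, this forces $C=0$, contradicting $\operatorname{rank}C=p$. Hence the unconstrained gradient has no zero on $\mathcal{G}$, and the lemma is true only in your tangential sense.

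That tangential sense is exactly the one used in Theorem~\ref{th:proj}. There, $\Pi\nabla f_G=0$ holds if and only if $R\overline{U}G+B_a^\top P_G=0$, so $\Pi$ annihilates precisely the discarded row. What the paper's version offers in return is a stationarity condition written directly in terms of the data-driven gradient that drives the flow.

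\textbf{A small refinement to your last step.} A stabilizing CARE solution is unique without any detectability assumption. Assumption~\ref{ass:3} is what guarantees that such a solution exists and coincides with the optimal $P^*$.
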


\begin{proof}
   By definition \eqref{math:G_set}, $\left[\begin{smallmatrix}\overline{X}' \\ -\overline{Y}\end{smallmatrix}\right] G$ is Hurwitz for all $G\in\mathcal{G}$, implying that the solution to the Lyapunov equation \eqref{math:lyap_W} is given by
\begin{align}
 W_G = \int_0^\infty e^{ \left[\begin{smallmatrix}\overline{X}'\\ -\overline{Y} \end{smallmatrix}\right] G t } e^{\big(\left[\begin{smallmatrix}\overline{X}'\\ -\overline{Y} \end{smallmatrix}\right] G\big)^{\!\top} t}\mathrm{d}t\succ 0. \label{math:W_G} 
\end{align} 
Hence, using \eqref{math:W_G} and inserting \eqref{math:data_1} into $\nabla f_G =0$ leads to
\begin{align}
0 & =
   \overline{U}^\top R \overline{U}G  + \begin{bmatrix}\overline{ X}\\ \overline{U} \end{bmatrix}^\top \begin{bmatrix}A & B \\ - C & 0  \end{bmatrix}^\top P_G. \label{math:grad_zero} 
\end{align} 
By Assumption~\ref{ass:2}, the square matrix $\left[\begin{smallmatrix}\overline{X} \\ \overline{U}\end{smallmatrix}\right]$ is invertible.
Left-multiplying the inverse of $\left[\begin{smallmatrix}\overline{X} \\ \overline{U}\end{smallmatrix}\right]^\top$ to \eqref{math:grad_zero} yields 
\begin{align}
 \begin{bmatrix} 0 \\ I \end{bmatrix} R \overline{U}G  = \begin{bmatrix} 0 \\ -RK \end{bmatrix} = - \begin{bmatrix} A^\top & - C^\top \\ B^\top & 0  \end{bmatrix} P_G,
\end{align} 
where the second block row is $RK =  B_a^\top P_G$. 
Since $P_G$ is the unique solution of the Lyapunov equation \eqref{math:lyap_P} for all $G \in \mathcal{G}$ and $R \succ 0$, it follows that $K^* = R^{-1} B_a^\top P_{G^*}$.
\end{proof}

To ensure that the linear equality constraint $\overline{X}G =
\begin{bmatrix}
I_n & 0_{n,p}
\end{bmatrix}$ is preserved along the gradient flow, we project the gradient $\nabla f_G$ onto the tangent space of $\mathcal{G}$, i.e., onto $\operatorname{ker}(\overline{X})$, using the orthogonal projection 
\begin{align}
 \Pi :=  I_{n+m} - \overline{X}^\dagger \overline{X},
\end{align} 
where $\overline{X}^\dagger = \overline{X}^\top (\overline{X} \overline{X}^\top)^{-1}$ is the right inverse of $\overline{X}$. 
As an orthogonal projection, $\Pi$ is symmetric and idempotent, implying its eigenvalues lie in $\{0,1\}$ and $\Pi\succeq 0$ \cite[1.1.P5]{horn2012matrix}.






\begin{theorem}
   \label{th:proj} 
   The projected gradient flow 
   \begin{align}
      \label{math:proj_grad} 
    \dot G = -\alpha \Pi \nabla f_G, \quad G(0)\in\mathcal{G}, \quad \alpha>0
   \end{align} 
   converges to $G^*$, i.e., $G(t)\to G^*$ for $t\to\infty$.  
\end{theorem}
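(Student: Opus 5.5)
The plan is a Lyapunov-type argument with $f_G$ as the Lyapunov function, combined with invariance of the affine constraint and compactness of sublevel sets.

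\emph{Invariance of the affine constraint.} Since $\overline{X}\Pi = \overline{X} - \overline{X}\,\overline{X}^\top(\overline{X}\,\overline{X}^\top)^{-1}\overline{X} = 0$, we get $\frac{d}{dt}\overline{X}G = -\alpha \overline{X}\Pi\nabla f_G = 0$. Hence $\overline{X}G(t) = [\,I_n\ \ 0_{n,p}\,]$ for as long as the solution exists. Here $\overline{X}\,\overline{X}^\top$ is invertible because $\overline{X}$ has full row rank by Assumption~\ref{ass:2}.

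\emph{Monotone decrease of the cost.} Along the flow,
\begin{align*}
\tfrac{d}{dt} f_{G(t)} = \operatorname{tr}(\nabla f_G^\top \dot G) = -\alpha\operatorname{tr}(\nabla f_G^\top \Pi \nabla f_G) = -\alpha\Vert \Pi\nabla f_G\Vert_F^2 \le 0,
\end{align*}
using that $\Pi$ is symmetric and idempotent. Thus $G(t)$ stays in the sublevel set $\mathcal{G}_c=\{G\in\mathcal{G}: f_G\le f_{G(0)}\}$.

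\emph{Existence for all time.} Coercivity of $f_G$ on the affine slice gives that $\mathcal{G}_c$ is compact and contained in the open set of Hurwitz-rendering $G$ within the slice. The vector field is real analytic, hence locally Lipschitz. A solution confined to a compact subset of the domain cannot escape in finite time, so it exists on $[0,\infty)$.

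\emph{Convergence.} By LaSalle's invariance principle on the compact forward-invariant set $\mathcal{G}_c$, every trajectory converges to the largest invariant subset of $\{G\in\mathcal{G}_c : \Pi\nabla f_G = 0\}$. It remains to show this set is $\{G^*\}$; convergence to $G^*$ then follows, since the $\omega$-limit set is nonempty, connected, and contained in a single point.

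\emph{Main obstacle: projected versus true stationarity.} Lemma~\ref{lem:unique} characterizes zeros of $\nabla f_G$, not of $\Pi\nabla f_G$, so the gap must be closed. The condition $\Pi\nabla f_G=0$ means $\nabla f_G=\overline{X}^\dagger\overline{X}\nabla f_G$, i.e.\ the columns of $\nabla f_G$ lie in the range of $\overline{X}^\top$. Write $\nabla f_G = \overline{X}^\top \Lambda$ and multiply by $W_G^{-1}$, using $W_G\succ0$ from \eqref{math:W_G}. Substituting \eqref{math:data_1} gives
\begin{align*}
\begin{bmatrix}\overline{X}\\ \overline{U}\end{bmatrix}^{\!\top}\Bigl(\begin{bmatrix}0\\ R\overline{U}G\end{bmatrix} + \begin{bmatrix}A&B\\-C&0\end{bmatrix}^{\!\top}P_G\Bigr) = \overline{X}^\top\Lambda W_G^{-1}.
\end{align*}
Since the stacked matrix $\bigl[\begin{smallmatrix}\overline{X}\\ \overline{U}\end{smallmatrix}\bigr]$ is invertible, this forces the lower ($\overline{U}$) block of the bracket to vanish, that is, $-RK + B_a^\top P_G = 0$ with $K=-\overline{U}G$. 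This is exactly the optimality condition $RK = B_a^\top P_G$ used in Lemma~\ref{lem:unique}.

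Combining $RK = B_a^\top P_G$ with the Lyapunov equation \eqref{math:lyap_P} shows that $P_G$ solves the CARE of $(A_a,B_a)$. Under Assumption~\ref{ass:3} this forces $P_G = P^*$, and hence $K=K^*$. Because $K$ and the constraint $\overline{X}G=[\,I_n\ \ 0_{n,p}\,]$ together determine $G$ uniquely via \eqref{math:cl_2} and the invertibility of the stacked matrix, $G=G^*$. So the projected stationary set is $\{G^*\}$.

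The step I expect to need the most care is this identification of projected-stationary points with $G^*$. Coercivity, and therefore compactness, is taken from the stated properties of $f_G$.
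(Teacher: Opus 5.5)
Your proof is correct and shares the paper's skeleton: $f_G$ (equivalently $V=f_G-f_{G^*}$) serves as a Lyapunov function with $\dot V=-\alpha\Vert\Pi\nabla f_G\Vert_F^2$, and the sublevel sets are compact by the asserted coercivity. It differs at the decisive step.

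The paper concludes that $\dot V(G)=0$ only at $G^*$ by citing Lemma~\ref{lem:unique}, whose proof starts from $\nabla f_G=0$. But $\dot V=0$ only gives $\Pi\nabla f_G=0$. Your computation closes exactly this gap. Write $\nabla f_G=2\bigl[\begin{smallmatrix}\overline{X}\\ \overline{U}\end{smallmatrix}\bigr]^\top M W_G$ with $M=\bigl[\begin{smallmatrix}0\\ R\overline{U}G\end{smallmatrix}\bigr]+\bigl[\begin{smallmatrix}A & B\\ -C & 0\end{smallmatrix}\bigr]^\top P_G$. Then $\Pi\nabla f_G=0$ is equivalent to the lower block $B_a^\top P_G-RK$ of $M$ vanishing, while the upper block $[\,A^\top\ \ -C^\top\,]P_G$ is left unconstrained. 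That upper block is generally nonzero: for example, if $Q_a\succ0$ then $P_G\succ0$, and its vanishing would force $C=0$. So the unprojected gradient does not vanish even at $G^*$. The paper's proof of the lemma uses only the second block row and silently drops the first; your projected characterization is what the lemma should really assert.

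Beyond this, you make two things explicit that the paper leaves to the surrounding text: invariance of $\overline{X}G=[\,I_n\ \ 0_{n,p}\,]$ via $\overline{X}\Pi=0$, and forward completeness of the flow. By invoking LaSalle you also avoid having to justify $V(G)>0$ on $\mathcal{G}\setminus\{G^*\}$. In the paper's version that claim itself needs the projected-stationarity characterization, because a minimizer over the relatively open set $\mathcal{G}$ in the affine slice satisfies $\Pi\nabla f_G=0$, not $\nabla f_G=0$. Both arguments take coercivity of $f_G$ as given.
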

\begin{proof}
   Let $V(G) = f_G - f_{G^*}$ be the Lyapunov candidate.
   Since $P_G\succeq 0$, $f_G = \operatorname{ tr }(P_G) = \sum_i \lambda_i(P_G)\geq 0$. This implies with Lemma~\ref{lem:unique},  $V(G^*)=0$ and $V(G)>0$ for all $G\in\mathcal G\setminus\{G^*\}$.
    The time derivative of $V(G)$ along \eqref{math:proj_grad}~is 
   \begin{align}
    \dot V(G) &= \operatorname{tr} ( \nabla f_G^\top \dot G  ) = - \alpha \operatorname{tr}( \nabla f_G^\top \Pi \nabla f_G )  \\
    & = - \alpha \Vert \Pi^{1/2} \nabla f_G \Vert_F^2 \leq 0.
   \end{align} 
   By Lemma~\ref{lem:unique}, $\dot V (G)=0$ holds only for $G=G^*$.  Since $\dot V(G)<0 $ for all $G\in\mathcal{G}\setminus \{G^*\}$, $V(G)$ is a Lyapunov function, implying asymptotic stability of $G^*$. Moreover, $G(t)$ remains in the compact sublevel set $\{ G\in\mathcal{G}\mid f_G \leq f_{G(0)}\}$ for any $G(0)\in\mathcal{G}$, because $V(G)$ is nonincreasing along $G(t)$. 
   It follows that $G(t)$ yields stabilizing feedback gains $K(t)$ for all $t \geq 0$. 
Consequently, the region of attraction of $G^*$ is $\mathcal{G}$.
\end{proof}



\section{Simulation Results}
\label{sec:sim} 
We consider a single bus of a DC microgrid that comprises a DGU with a controller and a constant-impedance load, following \cite{Nahata.2020}.
The control architecture resembles an LQI controller, where the states are fed back proportionally, and an additional integral state is introduced to ensure asymptotic tracking of the reference voltage. 
The closed-loop system is depicted in Fig.~\ref{fig:draw}.
The objective is to regulate the bus voltage $v$ to a reference voltage $r$ via the input voltage $u$ of the buck converter. The load is modeled as a constant admittance $Y$, while the buck converter is represented by an averaged model consisting of a controllable voltage source followed by an RLC filter. Filter and load parameters are 
$R=\SI{0.2}{\ohm}$, $L=\SI{2}{\milli\henry}$, and $C= \SI{2}{\milli\farad}$, and $Y = \SI{0.02}{\siemens}$.

 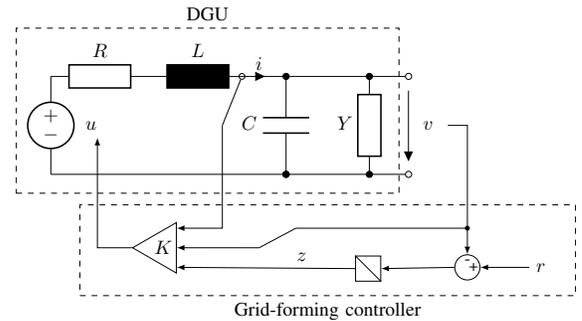
\begin{figure}[thb]
   \centering
   \scalebox{0.732}{
   \begin{tikzpicture}[scale=0.89]
      \draw (0,0) to [V=$u$] (0,-2);
      \draw (0,0) to [R,l=$R$] (2,0)
      to [L,l=$L$] (4,0) to [short,i=$i$](4.5,0) to [short,-*](4.8,0) 
      to [C,l_=$C$] (4.8,-2) to [short,*-] (0,-2);
      \draw (4.8,0) to [short,-o] (7.3,0);
      \draw (4.8, -2) to [short,*-o] (7.3,-2);
      \draw (7.3,0) to [open, v^=$~v$,voltage=straight](7.3,-2);
      \draw (6.5,0) to [R, l_=$Y$, *-*] (6.5,-2);
         \node[draw, label=above:{DGU}](Bat) at (3.2,-0.7) [rectangle,dashed,minimum width=6.95cm,minimum height=3cm] {};
      \node[isosceles triangle,draw,isosceles triangle apex angle=60,shape border rotate=180, inner sep = 2pt](Gain) at (2.3,-3.5){$K$};
      \draw [-latex] (Gain.apex) -| (0.95,-1.25);
      \node[draw,circle,inner sep=0pt, minimum size = 3pt] (I) at (3.9,0){};
      \draw [-] (I) -- (3.5,-1);
      \draw [-latex] (3.5,-1) |- ([yshift=+0.4cm]Gain.lower side);
      \draw [-] (8.1,-1) --(8.5,-1) |- (5,-3.1) -- (4.25,-3.5);
      \node[draw,circle,fill=black,minimum size = 2pt,inner sep=0pt] (circ) at (8.5,-3.1){};
      \draw [-latex] (4.25,-3.5) --(Gain.lower side);
      \node[draw,circle,inner sep=0pt, minimum size = 13pt] (sum) at (8.5,-3.9){};
      \draw[-latex] (8.5,-3.1) -- (sum);
      \node[draw, rectangle, minimum width = 13pt, minimum height = 13pt] (int) at (8.4,-2,5){};
      \draw[-] (int.south east) -- (int.north west);
      \draw[-latex] (int.west) -- node[pos=0.3,above,align=center] {$z$}([yshift=-0.4cm]Gain.lower side);
      \draw[-latex] (sum.west) -- (int.east);
      \node (Vref) at (10,-3.9){$r$};
      \draw[-latex] (Vref) -- (sum);
      \node[align=left] at (8.63,-3.9){\footnotesize +};
      \node[align=left] at (8.5,-3.79){\small -};
      \node[draw, label=below:{Grid-forming controller}](control) at (5.65,-3.55) [rectangle,dashed,minimum width=9cm,minimum height=1.65cm] {};
   \end{tikzpicture}
   }
   \caption{DGU in closed-loop with the LQI controller.}
   \label{fig:draw}
\end{figure}

The DGU with the load obeys the open-loop dynamics 
\begin{align}
   \label{math:DGU} 
 \begin{bmatrix} \dot v \\ \dot i \end{bmatrix}& = \begin{bmatrix} - \frac{1}{C} Y & \frac{1}{C} \\ -\frac{1}{L} & - \frac{1}{L} R\end{bmatrix} \begin{bmatrix}v \\ i \end{bmatrix} + \begin{bmatrix} 0 \\ \frac{1}{L }\end{bmatrix}u,
\end{align} 
where $v\in\mathbb{R}$ denotes the bus voltage and $i\in\mathbb{R}$ the filter current. The measured output is the bus voltage, yielding the output matrix $C =  \begin{bmatrix} 1 & 0 \end{bmatrix}$. 
The DGU controller has the LQI structure $u= -K_{\mathrm{PD}}  \left[\begin{smallmatrix} v \\ i \end{smallmatrix} \right]- K_{\mathrm{I}}\int_0^t r(\tau)-v(\tau) \mathrm{d}\tau$, where $K_{\mathrm{PD}} \in\mathbb{R}^{1\times 2}$ and $K_{\mathrm{I}}\in\mathbb{R}$. For the LQR design of the augmented system, the weighting matrices are chosen as $R = 1$ and $Q_a = \operatorname{diag}(1,1,100)$ to achieve fast tracking of the voltage reference.

 Since the system parameters are assumed to be unknown, e.g., the load may vary with time, the required data matrices are obtained from input-state-output measurements. To this end, a randomly generated input signal, piecewise constant over intervals of $\SI{0.02}{\second}$, is applied to the system. A total of $T=10$ samples are recorded using a sampling interval of $\delta = \SI{0.1}{\second}$. From these measurements, the data matrices  \eqref{math:int_data} and \eqref{math:data_Y} are constructed to compute the sample covariance matrices $\overline{X},\overline{U}, \overline{X}'$, and $\overline{Y}$. By Theorem~\ref{th:data}, these matrices can be used to parameterize the closed-loop system. For numerical simulation, the system dynamics \eqref{math:DGU} are integrated using \texttt{ode45} in MATLAB with default settings. As a ground truth, the optimal gain $K^\star$ is computed from the exact augmented model using the \texttt{lqr} function in MATLAB.

 Using the proposed data-driven approaches of Theorem~\ref{th:conv} and Theorem~\ref{th:proj}, we compute the feedback gain $K=\begin{bmatrix}K_{\mathrm{PD }}& K_{\mathrm{I}}\end{bmatrix}$ that minimizes the LQR objective of the augmented system \eqref{math:sys_a}.
In Fig.~\ref{fig:1}, the voltage $v(t)$ and the input $u(t)$ are shown. Within the first second, data is collected in open loop as described above. After the first second, the problem \eqref{math:convex_data} is solved to obtain $K^* \approx \begin{bmatrix}0.409 & 1.164 & -9.997\end{bmatrix}$, where $\Vert K^\star - K^* \Vert_F = 4.3 \times 10^{-4}$. 
From $\SI{1}{\second}$ to $\SI{4}{\second}$, the system is controlled in closed loop with the gain $K^*$, where the reference voltage $r(t)$ changes from $\SI{400}{\volt}$ to $\SI{600}{\volt}$ to $\SI{200}{\volt}$ at the time instances $t=\SI{2}{\second}$ and $t=\SI{3}{\second}$. The trajectory $v(t)$ in Fig.~\ref{fig:1} shows that the reference voltage is tracked by the bus voltage.

\begin{figure}
   \centering
   \includegraphics[scale=0.9]{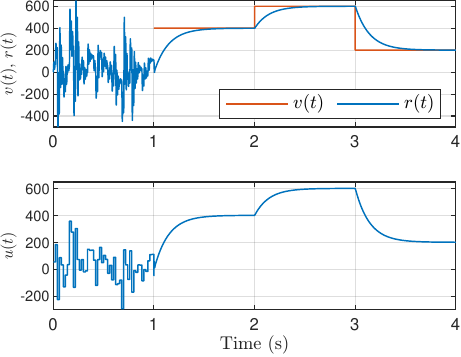}
   \caption{Trajectories of the bus voltage $v(t)$ and input $u(t)$ during open-loop data collection and closed-loop control.} 
   \label{fig:1}
\end{figure}

\begin{figure}
   \centering
   \includegraphics[scale=0.85]{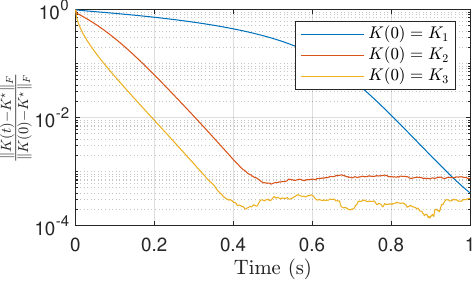}
   \caption{Normalized residuals   $\frac{\Vert K(t) - K^\star \Vert_F }{\Vert K(0) - K^\star \Vert_F}$ of the projected gradient flow for initial gains $K(0)\in\{K_1,K_2,K_3\}$.}
   \label{fig:2}
\end{figure}

In Fig.~\ref{fig:2}, the normalized residuals $\frac{\Vert K(t) - K^\star \Vert_F }{\Vert K(0) - K^\star \Vert_F}$ of the projected policy gradient flow \eqref{math:proj_grad} are shown for the initial stabilizing gains $K_1 = \begin{bmatrix}0.5 & 0.1 & -50\end{bmatrix} $, $K_2 = \begin{bmatrix} 5 & 1 & -15\end{bmatrix}$, and $K_3 = \begin{bmatrix} 0 & 0 & -1\end{bmatrix}$.
The initial values $G(0)$ are computed using \eqref{math:cl_2}, while the trajectories $G(t)$ are obtained form the projected gradient flow \eqref{math:proj_grad}, yielding the depicted trajectory $K(t) = - \overline{U} G(t)$. The learning rate $\alpha$ is chosen sufficiently large and only scales the time axis. 
For all initializations, the residuals converge linearly to zero, consistent with the model-based setting \cite{bu2020clqr}.

\begin{figure}
   \centering
   \includegraphics[scale=0.85]{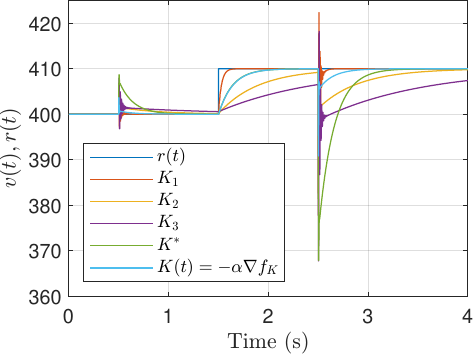}
   \caption{Trajectories of the voltage $v(t)$ for a time-varying load and different LQI controllers.}
   \label{fig:3}
\end{figure}

As an outlook, Fig.~\ref{fig:3} shows the voltage trajectories for different feedback gains under time-varying loads. The load admittance changes form $Y = \SI{0.02}{\siemens}$ to $Y = \SI{0.001}{\siemens}$ at $t=\SI{0.5}{\second}$, and then to $Y = \SI{0.1}{\siemens}$ at $t=\SI{2.5}{\second}$. Moreover, the reference voltage changes from $\SI{400}{\volt}$ to $\SI{410}{\volt}$ at  $t=\SI{1.5}{\second}$. The LQR gain $K^*$, computed for the nominal load $Y = \SI{0.02}{\siemens}$, achieves fast tracking without overshoot. In contrast, $K_1$ provides fast tracking due to its large integrator gain but results in significant overshoot at $t=\SI{0.5}{\second}$ and $t=\SI{2.5}{\second}$. The gain $K_2$ eliminates overshoot but yields slower reference tracking. The gain $K_3$ contains only an integrator term, resulting in slow tracking and insufficient damping of the system dynamics. 


The final controller in Fig.~\ref{fig:3} implements the policy gradient flow in closed loop with \eqref{math:DGU}, yielding a nonlinear dynamic state-feedback controller that adapts the gain toward the LQR optimal gain corresponding to the current load.
In the simulations, we used the model-based policy gradient for simplicity, but the data-driven variant can be applied online using the matrices $\overline{X},\overline{U}, \overline{X}'$, and $\overline{Y}$ collected online with exploration noise. 
No formal stability guarantees are provided for the considered controllers, as the system is time-varying due to the load changes. A stability analysis of the LQR policy gradient flow in closed loop with a linear time-varying system can be found in \cite{gießler2025dynamicstatefeedbackcontrollpv}.


 Nevertheless, the piecewise-constant load scenario is well suited for policy-gradient-based adaptation. Each load level corresponds to an associated LQR  feedback gain, and with a sufficiently large learning rate, the policy gradient flow quickly adapts the feedback gain to the optimal value corresponding to the current load. As a result, the controller regulates the new equilibrium with a near-optimal gain for most of the time between load changes, thereby approximately minimizing the infinite-horizon LQR cost over each time interval.



\section{Conclusion}
\label{sec:conc} 
This paper introduced a data-driven approach for the synthesis of LQI controllers for continuous-time systems. Using a closed-loop data-driven parameterization, we derived a convex optimization problem that enables the computation of the optimal LQR feedback gain of the augmented system directly from measured data. In addition, a policy gradient flow was introduced to compute the optimal controller within the set of stabilizing gains. The effectiveness of the proposed approach was illustrated using a DGU with an LQI controller in a DC microgrid. Future work will focus on extending the proposed approach to scenarios with process and measurement noise, as well as time-varying system dynamics. 

\printbibliography

\end{document}